\documentclass[12pt,a4paper]{article}

\textheight24cm \topmargin-20mm  \textwidth165mm  
\oddsidemargin=4mm
\evensidemargin=-4mm

\usepackage{graphicx}
\usepackage{multicol}
\usepackage{footmisc}

\usepackage{amsmath}

\usepackage{xcolor}
\usepackage{hyperref}
\usepackage{braket}
\usepackage{enumerate}
\usepackage{mathrsfs}
\usepackage{amsfonts}
\usepackage{amsthm}

\theoremstyle{plain}
\newtheorem{definition}{Definition}
\newtheorem{lemma}[definition]{Lemma}
\newtheorem{corollary}[definition]{Corollary}
\newtheorem{proposition}[definition]{Proposition}
\newtheorem{theorem}[definition]{Theorem}
\theoremstyle{definition}
\newtheorem{example}[definition]{Example}
\theoremstyle{remark}
\newtheorem{remark}[definition]{Remark}

\usepackage{tikz}
\usetikzlibrary{automata,positioning}

\title{Absorption and Fixed Points for Semigroups of Quantum Channels}
\author{Federico Girotti\\
\bigskip Dipartimento di Matematica dell'Universit\`a di Pavia\\Via Ferrata 1, 27100 Pavia, Italy\\e-mail: \texttt{f.girotti@campus.unimib.it }\\ORCID: \texttt{https://orcid.org/0000-0001-9303-1428}}

\begin{document}
\maketitle

{\noindent\bf Abstract.} In the present work we review and refine some results about fixed points of semigroups of quantum channels. Noncommutative potential theory enables us to show that the set of fixed points of a recurrent semigroup is a $W^*$-algebra; aside from the intrinsic interest of this result, it brings an improvement in the study of fixed points by means of absorption operators (a noncommutative generalization of absorption probabilities): under the assumption of absorbing recurrent space (hence allowing non-trivial transient space) we can provide a description of the fixed points set  and a probabilistic characterization of when it is a $W^*$-algebra in terms of absorption operators. Moreover we are able to exhibit an example of a recurrent semigroup which does not admit a decomposition of the Hilbert space into orthogonal minimal invariant domains (contrarily to the case of classical Markov chains and positive recurrent semigroups of quantum channels). 

\bigskip {\noindent\bf MSC classification:} 46L53, 81S22, 60J05, 60J25.

\bigskip {\noindent\bf Keywords:} quantum channels, quantum Markov semigroups, fixed points, absorption operators, quantum recurrence, null recurrence, reducibility.


\section{Introduction}
Quantum channels are mathematical objects used to described the most general evolution an open quantum system can undergo and semigroups of quantum channels (which in continuous time are known as quantum Markov semigroups) have been used for decades to model time evolution of open quantum systems under reasonable assumptions and approximations. From a mathematical point of view they represent an interesting noncommutative generalization of transition kernels and classical Markov semigroups. Many fundamental concepts and tools from the theory of classical Markov processes have been carried to the setting of semigroups of quantum channels and have provided valuable contributions to their study.

\bigskip The notion of positive recurrence for semigroups of quantum channels traces back to the 70s and it is a fundamental tool for instance for studying the long-time behaviour of quantum systems; taking inspiration from the theory of classical Markov semigroups, transience and the distinction between positive and null recurrence were further analyzed in \cite{FR2003,Uma} (in finite dimensional quantum systems, as in the case of Markov chains on a finite state space, null recurrence does not show up and the situation is far less complicated). However, there are still some open issues in the theory of recurrence for semigroups of quantum channels; following on from \cite{CG}, in this work we present some recent results that improve the understanding of null recurrence in the noncommutative setting.

Another topic which has drown attention in the last years is the study of the fixed points of the evolution: they are relevant for the asymptotics of the evolution of quantum systems (\cite{FV}) and whether or not they are a $W^*$-algebra has implications, for instance, on the relationship between conserved quantities and symmetries of the semigroup (see \cite{AGG} about the general problem of when fixed points are an algebra and \cite{Gh} for a discussion about Noether-type results in the context of quantum channels). Fixed points are well understood in the case of positive recurrent semigroups (\cite{CJ,FSU19,JP}), while less is known for general semigroups (\cite{Albert,CG,GK}). In the present work we review and improve some results of \cite{CG} which, under mild assumptions, characterize the fixed points sets in terms of absorption operators, which are a noncommutative generalization of absorption probabilities introduced always in \cite{CG}.

\bigskip Let us briefly present the setting. Let $\mathfrak{h}$ be a separable Hilbert space; we denote by $L^1(\mathfrak{h})$ trace class operators and by $B(\mathfrak{h})$ bounded linear operators; positive trace class operators with unit trace are called states and play the role of noncommutative probability densities. We recall that the topological dual of $L^1(\mathfrak{h})$ is isometric to $B(\mathfrak{h})$ via the following correspondence:
\begin{equation} \label{eq:duality}
B(\mathfrak{h}) \ni x \mapsto {\rm tr}(x\cdot) \in {L^1(\mathfrak{h})}^*.
\end{equation}
A quantum channel $\Phi:B(\mathfrak{h}) \rightarrow B(\mathfrak{h})$ is a completely positive unital $w^*$-continuous linear map, while the predual map $\Phi_*$ (defined via equation \eqref{eq:duality}) is a completely positive trace preserving map acting on $L^1(\mathfrak{h})$. A semigroup of quantum channels is a collection of quantum channels $(\mathcal{P}_t)_{t \in \mathfrak{T}}$ indexed by a semigroup $\mathfrak{T}$ such that
\[
\mathcal{P}_0={\rm Id}, \qquad \mathcal{P}_s \mathcal{P}_t=\mathcal{P}_{t+s}  \text{ for any }s,t \in \frak{T}.
\]
The collection of predual maps $(\mathcal{P}_{*t})$ forms a semigroup too.
We will mainly consider two cases: ${\frak T}=\mathbb{N}$ and the semigroup consists of the powers of a single quantum channel $\mathcal{P}:=(\Phi^n)_{n \in \mathbb{N}}$, and ${\frak T}=[0,+\infty)$ and we ask for the map $t \mapsto \mathcal{P}_t$ to be $w^*$-pointwise continuous (these semigroups are known in the literature as quantum Markov semigroups).

\bigskip In Section \ref{sec:prel} we briefly recall some fundamental concepts about reducibility of quantum channels. After that, we recap the basic notions of recurrence and transience for semigroups of quantum channels.

In Section \ref{sec:fp} we show that the fixed points set of a recurrent semigroup of quantum channels is a $W^*$-algebra. Moreover, assuming that the recurrent space is absorbing, we provide a description of the fixed points set of the semigroup in terms of absorption operators and we present some relevant consequences (these results have already been proved in \cite{CG} under stronger assumptions). We recall that the set of fixed points (sometimes called harmonic operators) of a semigroup of quantum channels is the following set:
\[\mathcal{F}(\mathcal{P}):=\{x \in B(\mathfrak{h}): \mathcal{P}_t(x)=x, \forall t \in {\frak T}\}.
\]

It is well known that in the case of classical Markov chains, recurrent states can be partitioned in communication classes; however, in Section \ref{sec:red} we show that recurrent semigroups in general do not admit a decomposition of $\mathfrak{h}$ into orthogonal minimal invariant spaces (hence the result \cite[Proposition 7.1]{CP3} for positive recurrent semigroups does not extend to recurrent ones).

\bigskip For a more detailed treatment of the topics and the results presented in this work, we refer to \cite{Gi22}.


\section{Preliminaries on semigroups of quantum channels} \label{sec:prel}

\paragraph{Reducibility.} The concept of reducibility for quantum channels was introduced in \cite{Dav} and since then have been intensively studied and exploited. We recall that for every positive operator $x$, ${\rm supp}(x):= \ker(x)^\perp$.
\begin{definition}[Enclosure] \label{def:enc}
A closed subspace ${\cal V}$ of $\mathfrak{h}$ is an enclosure (sometimes also called invariant domain) for a quantum channel $\Phi$ if, for any state $\rho$,
$$
{\rm supp}(\rho)\subseteq \mathcal{V} 
\qquad\mbox{implies}\qquad
{\rm supp}(\Phi_*(\rho))\subseteq \mathcal{V}.
$$ 
$\mathcal{V}$ is an enclosure for a semigroup $\mathcal{P}$ when it is an enclosure for any channel of the semigroup $\mathcal{P}$.\\
An enclosure $\mathcal{V}$ is said to be minimal if the only enclosures contained in $\mathcal{V}$ are the trivial ones, i.e. $\{0\}$ and $\mathcal{V}$.
\end{definition}
The following are equivalent (see \cite[Section 3]{CP1}):
\begin{itemize}
\item $\mathcal{V}$ is an enclosure for $\Phi$,
\item $p_\mathcal{V}$ is a reducing or subharmonic projection for $\Phi$, i.e. $\Phi(p_\mathcal{V})\ge p_\mathcal{V}$
\item $p_\mathcal{V} L^1(\mathfrak{h}) p_\mathcal{V}$ is hereditary for $\Phi_*$, i.e. 
$\Phi_*(p_\mathcal{V} L^1(\mathfrak{h})p_\mathcal{V})\subseteq  p_\mathcal{V} L^1(\mathfrak{h}) p_\mathcal{V}$.
\end{itemize}
When $\mathcal{V}$ is an enclosure, we can define the restricted quantum channel $\Phi^\mathcal{V}$ and the corresponding predual map:
\begin{equation}\label{eq:restriction}
\begin{split}
 \Phi^{\mathcal{V}}(p_\mathcal{V} xp_\mathcal{V}) &:= p_\mathcal{V}\Phi(p_\mathcal{V}  xp_\mathcal{V}) p_\mathcal{V} = p_\mathcal{V}\Phi(x)p_\mathcal{V}
\qquad \forall \, x\in B(\mathfrak{h}),\\
\Phi_*^{\mathcal{V}}(p_\mathcal{V} xp_\mathcal{V})&: = p_\mathcal{V}\Phi_*(p_\mathcal{V} xp_\mathcal{V}) p_\mathcal{V} =\Phi_*( p_\mathcal{V} xp_\mathcal{V})
\qquad \forall \, x\in L^1(\mathfrak{h}).
\end{split}
\end{equation}
Thank to equation \eqref{eq:restriction}, the restrictions of a semigroup of quantum channels $\mathcal{P}^\mathcal{V}$ still mantain the semigroup property. Where it does not create confusion, we will often identify $B(\mathcal{V})$ with $p_\mathcal{V} B(\mathfrak{h})p_\mathcal{V}$.

\paragraph{Recurrence and transience.} Recurrence and transience are strictly connected to reducibility properties and, of course, the ergodic behaviour of the semigroup; for the convenience of the reader, we briefly recall some related definitions and relevant properties following \cite{FR2003,Uma} (see \cite{GK} for the discrete time case). In analogy with the theory of classical Markov chains, the following space is called the positive recurrent space:
\begin{equation}
\mathcal{R}_+:=\sup\{{\rm supp}(\rho): \rho \text{ is an invariant state, i.e. $\forall \, t \geq 0$, $\mathcal{P}_{t*}(\rho)=\rho$}\}.
\end{equation}
It is well known that $\mathcal{R}_+$ is an enclosure (\cite[Proposition 3]{Uma}). For every positive $x \in B(\mathfrak{h})$, we call {\it form-potential} of $x$ the following positive symmetric closed quadratic form: $\forall\, v \in \mathfrak{h}$
$$
{\mathfrak U}(x)[v]
:= \int_0^\infty {\rm tr}(\mathcal{P}_{*t}(\ket{v}\bra{v}) x) dm(t),
\quad
D({\mathfrak U}(x)) = \left \{  v\in \mathfrak{h}\,:\, {\mathfrak U}(x)[v]<+\infty \right \}.
$$ 
$m$ stays for the counting measure in the discrete time case and for the Lebesgue measure in the continuous time setting. Loosely speaking, we can say that for any orthogonal projection $p$,  ${\frak U}(p)[v]$ represents the average time spent in ${\rm supp}(p)$ by a quantum systems that starts in the state $\ket{v}\bra{v}$ and evolves according to $\mathcal{P}$; if ${\frak U}(p)$ is bounded, no matter what is the initial state $\ket{v}\bra{v}$, the system spends on average a finite amount of time in ${\rm supp}(p)$. We define $\ker({\frak U}(x))=\{v \in \mathfrak{h}: {\frak U}(x)[v]=0\}$ and ${\rm supp}({\frak U}(x))=\ker({\frak U}(x))^\perp$; the transient space is given by 
\[\mathcal{T}=\sup\{{\rm supp}({\frak U}(x)): x \in B(\mathfrak{h}), x\geq 0, {\frak U}(x) \text{ bounded}\}.
\]
$\mathcal{R}:=\mathcal{T}^\perp$ is an enclosure and $\mathcal{T} \perp \mathcal{R}_+$ (see \cite[Proposition 7 and Corollary 2]{Uma}), hence it is meaningful to call $\mathcal{R}$ the recurrent space and to define $\mathcal{R}_0:=\mathcal{R} \cap \mathcal{R}_+^\perp$ the null recurrent space; as in the classical case, it turns out that the null recurrent space is an enclosure too (\cite[Theorem 9]{CG}). We say that the semigroup $\mathcal{P}$ is positive recurrent (resp. null recurrent/recurrent/transient) if $\mathfrak{h}=\mathcal{R}_+$ (resp. $\mathcal{R}_0$, $\mathcal{R}$, $\mathcal{T}$); in general, a semigroup $\mathcal{P}$ does not need to be of one of the above types, but \cite[Theorem 9]{Uma} shows that it can always be decomposed in its transient and recurrent restrictions, $\mathcal{P}^{\mathcal{T}}$ and $\mathcal{P}^{\mathcal{R}}$, and the recurrent restriction splits again into the positive and null recurrent restrictions, $\mathcal{P}^{\mathcal{R}_+}$ and $\mathcal{P}^{\mathcal{R}_0}$ (since $\mathcal{R}$, $\mathcal{R}_+$ and $\mathcal{R}_0$ are enclosures, the corresponding restrictions are defined as in equation \eqref{eq:restriction}).


\section{Absorption operators to describe fixed points} \label{sec:fp}

\paragraph{Recurrent semigroups.} In the first part of this section, unless differently specified, we assume that the semigroup $\mathcal{P}$ is recurrent, i.e. $\mathfrak{h}=\mathcal{R}$; notice that, in the case of a generic semigroup $\mathcal{P}$, everything that we prove holds true for the recurrent restriction $\mathcal{P}^{\mathcal{R}}$. In the case of recurrent semigroups, $\mathcal{F}(\mathcal{P})$ is tightly related to communication properties of the semigroup: every projection in $\mathcal{F}(\mathcal{P})$ is by definition the projection onto an enclosure and the vice versa is also true.
\begin{theorem}[Theorem 9, \cite{CG}] \label{thm:deco}
If ${\cal V},{\cal Z}$ are increasing enclosures included in $\mathcal{R}$, i.e. such that ${\cal V}\subseteq {\cal Z} \subseteq \mathcal{R}$, then ${\cal Z}\cap{\cal V}^\perp$ is an enclosure.
\end{theorem}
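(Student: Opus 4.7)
The plan is to reduce the claim to the single key lemma: \emph{in a recurrent semigroup every subharmonic projection is harmonic}. Once this lemma is in hand, both $p_\mathcal{V}$ and $p_\mathcal{Z}$ are fixed by every $\mathcal{P}_t$, the difference $r:=p_\mathcal{Z}-p_\mathcal{V}$ is a projection (since $p_\mathcal{V}\le p_\mathcal{Z}$), and $\mathcal{P}_t(r)=r\ge r$ shows that $\mathrm{Range}(r)=\mathcal{Z}\cap\mathcal{V}^\perp$ is an enclosure. Because $\mathcal{V},\mathcal{Z}\subseteq\mathcal{R}$ and $\mathcal{R}$ is itself an enclosure, I would first restrict to $\mathcal{P}^\mathcal{R}$ and assume $\mathfrak{h}=\mathcal{R}$; this reduction is essential because the lemma genuinely relies on recurrence.

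For the key lemma the strategy is a telescoping identity that feeds directly into the form-potential machinery. Let $p$ be a subharmonic projection and fix $s\in\mathfrak{T}$ with $s>0$; the positive operator $y:=\mathcal{P}_s(p)-p$ satisfies, in the discrete case, the bound
\[
\sum_{k=0}^{n-1}\mathcal{P}_{ks}(y)\;=\;\mathcal{P}_{ns}(p)-p\;\le\;\mathbf{1},
\]
and in continuous time the analogous identity
\[
\int_0^T \mathcal{P}_u(y)\,du\;=\;\int_T^{T+s}\mathcal{P}_u(p)\,du-\int_0^s \mathcal{P}_u(p)\,du
\]
bounds $\int_0^T\mathcal{P}_u(y)\,du$ in operator norm by $s$, uniformly in $T$. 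Either way $\mathfrak{U}(y)$ is a bounded form, so $\mathrm{supp}(\mathfrak{U}(y))\subseteq\mathcal{T}=\{0\}$ and $\mathfrak{U}(y)=0$. Since every $\mathcal{P}_t(y)$ is positive, the vanishing of the potential forces $\langle v,yv\rangle=0$ for every $v$, i.e.\ $y=0$, so $\mathcal{P}_s(p)=p$. As $s$ is arbitrary in $\mathfrak{T}$, $p$ is harmonic.

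The main obstacle I expect is connecting the telescoping estimate with the \emph{definition} of the transient space through $\mathfrak{U}$: the bound on $\sum_{k}\mathcal{P}_{ks}(y)$ (resp.\ $\int_0^T\mathcal{P}_u(y)\,du$) must be shown to imply that $\mathfrak{U}(y)$ is bounded in the sense used in the construction of $\mathcal{T}$, after which recurrence closes the argument. Once that is pinned down, the remainder of the proof — subtracting two harmonic projections and identifying their difference as the harmonic projection onto $\mathcal{Z}\cap\mathcal{V}^\perp$ — is a one-line check.
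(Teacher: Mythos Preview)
Your argument is correct. The key lemma --- that in a recurrent semigroup every subharmonic projection is harmonic --- follows exactly as you outline: the telescoping identity bounds $\mathfrak{U}(y)$ (in discrete time it is cleanest to take $s=1$, or else telescope over \emph{all} $k$, not just multiples of $s$, to match the definition of $\mathfrak{U}$), recurrence forces $\mathfrak{U}(y)=0$, and positivity of each term gives $y=0$. The deduction of the theorem from the lemma is then immediate.

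There is nothing in the present paper to compare with: Theorem~\ref{thm:deco} is quoted from \cite{CG} without proof. What is worth noting is that you have \emph{inverted} the paper's logical order. The paper uses Theorem~\ref{thm:deco} (taken as input) to conclude in Corollary~\ref{coro:proj} that enclosure projections are harmonic; you instead prove directly that subharmonic projections are harmonic and obtain Theorem~\ref{thm:deco} as a consequence. Your potential/telescoping argument is in the same spirit as the paper's proof of Proposition~\ref{prop:recfp}, but is more elementary: you get the boundedness of $\mathfrak{U}(y)$ by a one-line telescoping estimate, whereas the paper's analogous step (for $y-x^*x$) appeals to the Riesz-type decomposition \cite[Theorem~4]{FR2003}. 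So your route is self-contained and slightly more economical, at the cost of treating only projections rather than general subharmonic elements.
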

Hence projections corresponding to enclosures are harmonic projections. In general $\mathcal{F}(\mathcal{P})$ is only a selfadjoint $w^*$-closed linear space, however the following result proves that the fixed points set of a recurrent semigroup is a $W^*$-algebra, hence it is completely determined by its projections.

\begin{proposition} \label{prop:recfp}
If $\mathcal{P}$ is recurrent, then $\mathcal{F}(\mathcal{P})$ is a $W^*$-algebra.
\end{proposition}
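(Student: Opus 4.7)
The plan is to show that every $x \in \mathcal{F}(\mathcal{P})$ lies in the multiplicative domain of each $\mathcal{P}_t$. Since $\mathcal{P}_t$ is completely positive and unital, the Kadison--Schwarz inequality already gives $\mathcal{P}_t(x^*x) \geq \mathcal{P}_t(x)^*\mathcal{P}_t(x) = x^*x$ whenever $x$ is fixed, so the only thing missing is the reverse inequality
\[
\mathcal{P}_t(x^*x) = x^*x \qquad \forall\, x \in \mathcal{F}(\mathcal{P}),\ t \in \mathfrak{T}.
\]
Once this is established, Choi's multiplicative domain theorem forces $\mathcal{P}_t(xy) = \mathcal{P}_t(x)\mathcal{P}_t(y) = xy$ for all $x, y \in \mathcal{F}(\mathcal{P})$, so $\mathcal{F}(\mathcal{P})$ is a unital $*$-subalgebra of $B(\mathfrak{h})$; together with its already known $w^*$-closure and self-adjointness, it is then a $W^*$-subalgebra.

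\textbf{Potential-theoretic lemma.} The core step I would prove is the following: under recurrence, every positive bounded superharmonic element is in fact harmonic. Given $0 \leq y \in B(\mathfrak{h})$ with $\mathcal{P}_t(y) \leq y$ for all $t$, I fix $s \in \mathfrak{T}$, $s > 0$, and set $z := y - \mathcal{P}_s(y) \geq 0$. A telescoping argument at the level of the sesquilinear form yields, for every $v \in \mathfrak{h}$,
\[
\mathfrak{U}(z)[v] = \int_0^s \langle v, \mathcal{P}_t(y) v\rangle\, dm(t) - \lim_{T \to \infty} \int_T^{T+s} \langle v, \mathcal{P}_t(y) v\rangle\, dm(t) \leq s\|y\|\|v\|^2,
\]
so $\mathfrak{U}(z)$ is a bounded quadratic form. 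Recurrence ($\mathcal{T} = \{0\}$) then forces $\mathrm{supp}(\mathfrak{U}(z)) = \{0\}$, and hence $\mathfrak{U}(z) \equiv 0$; evaluating the non-negative integrand at $t=0$ gives $z = 0$, i.e.\ $\mathcal{P}_s(y) = y$ for every $s$.

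\textbf{Conclusion and main obstacle.} I then apply the lemma to $y := \|x\|^2 I - x^*x$, which is positive, bounded, and (by Schwarz applied to the fixed element $x$) superharmonic; the lemma returns $\mathcal{P}_t(x^*x) = x^*x$, closing the loop with the multiplicative-domain argument. The step I expect to be the main obstacle is the telescoping manipulation in the continuous-time setting: in the null recurrent regime the individual integrals $\int_0^\infty \langle v, \mathcal{P}_t(y) v\rangle\, dt$ typically diverge, so the cancellation $\mathcal{P}_t(z) = \mathcal{P}_t(y) - \mathcal{P}_{t+s}(y)$ must be performed pointwise in $v$ before taking $T \to \infty$, and the boundary term $\int_T^{T+s}\langle v, \mathcal{P}_t(y) v\rangle\,dm(t)$ must be handled by its non-negativity rather than by trying to show it vanishes. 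Beyond this, everything reduces to a clean interplay between Kadison--Schwarz, Choi's multiplicative domain theorem, and the structural identity $\mathcal{R} = \mathfrak{h}$.
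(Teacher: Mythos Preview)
Your argument is correct and follows the same underlying mechanism as the paper's proof: both reduce the question to the potential-theoretic principle that, under recurrence, a bounded positive superharmonic operator must already be harmonic, and both feed this back through Kadison--Schwarz to force $x^*x\in\mathcal{F}(\mathcal{P})$ for every fixed $x$.

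The packaging, however, is genuinely different. The paper argues by contradiction: assuming $x^*x\notin\mathcal{F}(\mathcal{P})$, it passes to the $w^*$-limit $y=\lim_t\mathcal{P}_t(x^*x)$, observes that $y-x^*x$ is a non-zero positive superharmonic element decreasing to $0$, and then invokes \cite[Theorem~4]{FR2003} as a black box to produce a non-zero $z$ with $\mathfrak{U}(z)$ bounded, contradicting $\mathcal{T}=\{0\}$. You instead prove the superharmonic-implies-harmonic lemma by hand: for a superharmonic $y$ you set $z=y-\mathcal{P}_s(y)$ and obtain the bound $\mathfrak{U}(z)[v]\le s\|y\|\|v\|^2$ by an explicit telescoping computation, so recurrence forces $z=0$ directly. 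This avoids both the external reference to \cite{FR2003} and the appeal to \cite[Lemma~2.2]{AGG} (your use of Choi's multiplicative domain, or equivalently polarization, replaces the latter). The trade-off is that the paper's route is shorter on the page, while yours is self-contained and makes transparent exactly where the recurrence hypothesis enters. Your identification of the continuous-time boundary term $\int_T^{T+s}\langle v,\mathcal{P}_t(y)v\rangle\,dt$ as the only delicate point is accurate; dropping it by non-negativity (rather than trying to show it vanishes) is precisely the right move, and $w^*$-continuity of $t\mapsto\mathcal{P}_t$ lets you pass from $\mathfrak{U}(z)\equiv 0$ to $z=0$ by evaluating the non-negative continuous integrand at $t=0$.
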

We remark that this fact was already known in case of positive recurrent semigroups, i.e. when $\mathcal{R}=\mathcal{R}_+$ (see \cite[Theorem 2.3]{AGG}).
\begin{proof}
If $\mathcal{F}(\mathcal{P})$ is not an algebra, \cite[Lemma 2.2]{AGG} shows that there exists $x \in \mathcal{F}(\mathcal{P})$ such that $x^*x \not\in \mathcal{F}(\mathcal{P})$. Kadison-Schwarz inequality implies that
\begin{equation} \label{eq:incrnet}
\mathcal{P}_t(x^*x) \geq \mathcal{P}_t(x^*)\mathcal{P}_t(x)=x^*x \text{ for all } t \geq 0,
\end{equation}
which means that $x^*x$ is subharmonic and $(\mathcal{P}_t(x^*x))$ is a bounded positive monotone increasing net; we call $y$ its least upper bound. $y$ is a fixed point, since $\forall s \in {\frak T}$
\[\mathcal{P}_s(y)=\lim_{t \rightarrow +\infty}\mathcal{P}_s(\mathcal{P}_t (x^*x))=\lim_{t \rightarrow +\infty}\mathcal{P}_{t+s} (x^*x)=y.
\]
Notice that
\begin{itemize}
\item $\forall t \in {\frak T}$, $\mathcal{P}_t(y-x^*x)=y-\mathcal{P}_t(x^*x) \leq y-x^*x$ (superharmonic);
\item $\mathcal{P}_t(y-x^*x) \searrow 0$.
\end{itemize}
Moreover $y-x^*x \neq 0$: since $x^*x \not\in \mathcal{F}(\mathcal{P})$, there must be a time $\bar{t}>0$ such that the inequality in equation \eqref{eq:incrnet} is not an equality, hence
\[y-x^*x\geq \mathcal{P}_{\bar{t}}(x^*x) -x^*x\geq 0 \text{ and } \mathcal{P}_{\bar{t}}(x^*x) -x^*x\neq 0.
\]
By \cite[Theorem 4]{FR2003}, there exists a non-null positive operator $z$ such that ${\frak U}(z)=y-x^*x$ is bounded, hence $\{0\} \neq {\rm supp}({\frak U}(z)) \subset \mathcal{T}$ and the transient space is non-trivial.
\end{proof}
\begin{corollary} \label{coro:proj}
For every enclosure $\mathcal{V}$, $p_\mathcal{V} \in \mathcal{F}(\mathcal{P})$ and 
\[\mathcal{F}(\mathcal{P})=\overline{{\rm span}\{p_\mathcal{V}: \mathcal{V} \text{ enclosure}\}}^{\| \text{ }\|_\infty}.
\]
\end{corollary}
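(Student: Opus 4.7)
The plan is to derive both assertions directly from Theorem \ref{thm:deco} and Proposition \ref{prop:recfp}, without needing any new machinery.

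For the first assertion, I would apply Theorem \ref{thm:deco} to the pair of enclosures $\mathcal{V} \subseteq \mathfrak{h} = \mathcal{R}$ (recall we are under the standing recurrence hypothesis of this subsection). The theorem then yields that $\mathcal{V}^\perp = \mathcal{R} \cap \mathcal{V}^\perp$ is itself an enclosure. By the equivalence recalled in Section \ref{sec:prel}, its projection $p_{\mathcal{V}^\perp} = I - p_\mathcal{V}$ is subharmonic, i.e.\ $\mathcal{P}_t(I - p_\mathcal{V}) \geq I - p_\mathcal{V}$. Since $\mathcal{P}_t$ is unital, this rearranges to $\mathcal{P}_t(p_\mathcal{V}) \leq p_\mathcal{V}$. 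On the other hand, $\mathcal{V}$ being an enclosure already gives $\mathcal{P}_t(p_\mathcal{V}) \geq p_\mathcal{V}$. Combining the two inequalities forces $\mathcal{P}_t(p_\mathcal{V}) = p_\mathcal{V}$ for every $t \in \mathfrak{T}$, so $p_\mathcal{V} \in \mathcal{F}(\mathcal{P})$.

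For the second assertion, I would invoke Proposition \ref{prop:recfp}, which tells us $\mathcal{F}(\mathcal{P})$ is a $W^*$-algebra. A standard consequence of the spectral theorem is that any $W^*$-algebra coincides with the norm closure of the linear span of its own projections (decompose a selfadjoint element via its spectral resolution and approximate uniformly by step functions, then complexify). It therefore suffices to identify the projections of $\mathcal{F}(\mathcal{P})$ with $\{p_\mathcal{V} : \mathcal{V} \text{ is an enclosure}\}$. One inclusion is exactly the first part of the statement; for the other, if $p \in \mathcal{F}(\mathcal{P})$ is a projection, then $\mathcal{P}_t(p) = p \geq p$ says $p$ is subharmonic, so by the characterization of enclosures recalled in Section \ref{sec:prel}, $\mathcal{V} := \mathrm{range}(p)$ is an enclosure with $p_\mathcal{V} = p$.

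There is no serious obstacle here: both directions are essentially bookkeeping once Theorem \ref{thm:deco} and Proposition \ref{prop:recfp} are in hand. The only point worth emphasizing is the use of Theorem \ref{thm:deco} \emph{with $\mathcal{Z} = \mathfrak{h}$}, which is exactly where the recurrence assumption $\mathfrak{h} = \mathcal{R}$ enters to convert subharmonicity of $p_\mathcal{V}$ into full harmonicity; without recurrence, $\mathfrak{h}$ need not belong to the class of enclosures to which Theorem \ref{thm:deco} applies, and the conclusion genuinely fails.
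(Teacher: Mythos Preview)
Your proposal is correct and follows essentially the same route as the paper: use Theorem~\ref{thm:deco} (with $\mathcal{Z}=\mathfrak{h}=\mathcal{R}$) to upgrade subharmonicity of $p_\mathcal{V}$ to harmonicity, then invoke Proposition~\ref{prop:recfp} and the fact that a $W^*$-algebra is the norm closure of the span of its projections. Your write-up is in fact more explicit than the paper's, since you spell out both the sub/superharmonic sandwich and the converse identification of projections in $\mathcal{F}(\mathcal{P})$ with enclosure projections.
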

\begin{proof}
As we already pointed out, Theorem \ref{thm:deco} shows that projections corresponding to enclosures are harmonic projections and, since $\mathcal{F}(\mathcal{P})$ is a $W^*$-algebra, it is the norm-closure of the linear span of its projections.
\end{proof}
Another immediate consequence is a nice diagonal structure of the elements in $\mathcal{F}(\mathcal{P})$.
\begin{corollary} \label{coro:diag}
Let $x \in \mathcal{F}(\mathcal{P})$, then $x=p_{\mathcal{R}_+}x p_{\mathcal{R}_+} + p_{\mathcal{R}_0}x p_{\mathcal{R}_0}$.
\end{corollary}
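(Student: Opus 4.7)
The plan is to reformulate the identity as the commutation $xp_{\mathcal{R}_+}=p_{\mathcal{R}_+}x$ and reduce it, by density, to the case of projections onto enclosures. By Corollary~\ref{coro:proj}, both $p_{\mathcal{R}_+}$ and $p_{\mathcal{R}_0}$ lie in $\mathcal{F}(\mathcal{P})$, and $\mathcal{F}(\mathcal{P})$ is the norm-closed linear span of the projections $p_\mathcal{V}$ with $\mathcal{V}$ an enclosure. Since the commutant $\{p_{\mathcal{R}_+}\}'$ is norm-closed, it suffices to show that every such $p_\mathcal{V}$ commutes with $p_{\mathcal{R}_+}$, i.e.\ that
\[
\mathcal{V}=(\mathcal{V}\cap\mathcal{R}_+)\oplus(\mathcal{V}\cap\mathcal{R}_0).
\]

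For this splitting I would first note that $\mathcal{V}\cap\mathcal{R}_+$ is an enclosure (intersection of enclosures), so Theorem~\ref{thm:deco} applied to $\mathcal{V}\cap\mathcal{R}_+\subseteq\mathcal{V}\subseteq\mathcal{R}$ yields that $\mathcal{W}:=\mathcal{V}\cap(\mathcal{V}\cap\mathcal{R}_+)^\perp$ is again an enclosure and $\mathcal{V}=(\mathcal{V}\cap\mathcal{R}_+)\oplus\mathcal{W}$. The remaining and main task is to verify $\mathcal{W}\subseteq\mathcal{R}_0$, equivalently $\mathcal{W}\perp\mathcal{R}_+$.

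For this I would argue by contradiction. If $\mathcal{W}\not\perp\mathcal{R}_+$, then the definition of $\mathcal{R}_+$ as supremum of supports of invariant states produces an invariant state $\rho_0$ with $p_\mathcal{W}\rho_0 p_\mathcal{W}\neq 0$. Since $p_\mathcal{W}\in\mathcal{F}(\mathcal{P})$ is a harmonic projection, both $p_\mathcal{W}^*p_\mathcal{W}=p_\mathcal{W}$ and $\mathcal{P}_t(p_\mathcal{W})=p_\mathcal{W}$ are preserved, so $\mathcal{P}_t(p_\mathcal{W}^*p_\mathcal{W})=\mathcal{P}_t(p_\mathcal{W})^*\mathcal{P}_t(p_\mathcal{W})$; hence $p_\mathcal{W}$ lies in the multiplicative domain of every $\mathcal{P}_t$, and dually $\mathcal{P}_{*t}(p_\mathcal{W}\rho p_\mathcal{W})=p_\mathcal{W}\mathcal{P}_{*t}(\rho)p_\mathcal{W}$ for every $\rho\in L^1(\mathfrak{h})$. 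Consequently $p_\mathcal{W}\rho_0 p_\mathcal{W}$ is a non-zero positive invariant trace-class operator, whose normalization $\sigma$ is an invariant state with ${\rm supp}(\sigma)\subseteq\mathcal{W}$. But ${\rm supp}(\sigma)\subseteq\mathcal{R}_+$ by the very definition of $\mathcal{R}_+$, so ${\rm supp}(\sigma)\subseteq\mathcal{W}\cap(\mathcal{V}\cap\mathcal{R}_+)=\{0\}$ because $\mathcal{W}\perp(\mathcal{V}\cap\mathcal{R}_+)$; hence $\sigma=0$, a contradiction.

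The main obstacle is precisely this last step, $\mathcal{W}\perp\mathcal{R}_+$: the key leverage is the observation, immediate from $\mathcal{F}(\mathcal{P})$ being a $W^*$-algebra, that every fixed projection sits in the multiplicative domain of the $\mathcal{P}_t$'s, which is exactly what turns compressions of invariant states into invariant states and transfers the positive-recurrent mass inside $\mathcal{W}$ into a genuine invariant state supported in $\mathcal{W}$.
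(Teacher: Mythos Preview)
Your proof is correct and follows the same reduction as the paper: use Corollary~\ref{coro:proj} to reduce to projections $p_\mathcal{V}$ onto enclosures, and then establish the splitting $\mathcal{V}=(\mathcal{V}\cap\mathcal{R}_+)\oplus(\mathcal{V}\cap\mathcal{R}_0)$. The only difference is that the paper outsources this last splitting to \cite[Corollary~17]{CG}, whereas you supply a self-contained argument for it via Theorem~\ref{thm:deco} and the multiplicative-domain trick for fixed projections; this is a clean way to reprove that cited fact within the present framework.
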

\begin{proof}
By Corollary \ref{coro:proj}, it is enough to prove the statement for the projections corresponding to the enclosures, hence the result follows from \cite[Corollary 17]{CG}.
\end{proof}

\paragraph{Absorbing recurrent space.} We now turn our attention to the fixed points set of semigroups with non-trivial transient part. For every enclosure $\mathcal{V}$, one can consider the corresponding absorption operator, defined as
\[A(\mathcal{V}):= w^*-\lim_{t \rightarrow +\infty} \mathcal{P}_t(p_\mathcal{V}).
\]
Absorption operators are a noncommutative generalization of absorption probabilities: for every $v \in \mathfrak{h}$
\[\langle v, A(\mathcal{V}) v \rangle :=\lim_{t \rightarrow +\infty} {\rm tr}(\mathcal{P}_{*t}(\ket{v}\bra{v})p_\mathcal{V})
\]
can be interpreted as the asymptotic probability of finding in $\mathcal{V}$ a quantum systems that starts in the state $\ket{v}\bra{v}$ and evolves according to $\mathcal{P}$. Absorption operators have a convenient block structure which is related to recurrence: Theorem 14 in \cite{CG} shows that
\begin{equation} \label{eq:blocks}
A(\mathcal{V})=p_\mathcal{V} + p_{\mathcal{V}^\perp \cap \mathcal{T}} A(\mathcal{V}) p_{\mathcal{V}^\perp \cap \mathcal{T}}.
\end{equation}
Moreover absorption operators are fixed points of $\mathcal{P}$ (\cite[Proposition 4]{CG}) and, since the set of fixed points is norm closed, we know that
\[\overline{{\rm span}\{A(\mathcal{V}): \text{ $\mathcal{V}$ enclosure}\}}^{\| \text{ }\|_\infty} \subset \mathcal{F}(\mathcal{P}).\]
A natural question is to find under which conditions the reverse inclusion is true and so fixed points are completely described by absorption operators. We think it is an interesting problem because it creates a bridge between the fixed points set and the absorption dynamic and communication properties of the semigroup.

Although we still are not able to characterize the situation in full generality, we can improve \cite[Theorem 22]{CG} and show that fixed points are completely characterized by absorption operators when the recurrent space is absorbing, that is $A(\mathcal{R})=\mathbf{1}$. It means that asymptotically the evolution gets absorbed in the recurrent space, which, therefore, contains all relevant information regarding asymptotic quantities; it is not a very restrictive hypothesis: it holds true if $\mathfrak{h}$ is finite dimensional and stronger restrictions where often considered in ergodic theory (\cite{FV,GK}). Furthermore, as pointed out already in \cite{FV}, there is a wide class of quantum Markov semigroups for which checking its validity reduces to an analogous problem for a classical Markov chains.
\begin{theorem} \label{thm:fp}
If $A(\mathcal{R})=\mathbf{1}$, then
\[\mathcal{F}(\mathcal{P})=\overline{{\rm span}\{A(\mathcal{V}): \text{ $\mathcal{V}\subset \mathcal{R}$ enclosure}\}}^{\|\text{ } \|_\infty}={\cal F}(\mathcal{P}^{\mathcal{R}_+}) \oplus {\cal F}(\mathcal{P}^{\mathcal{R}_0}) \oplus p_\mathcal{T} \mathcal{F}(\mathcal{P}) p_\mathcal{T}.\]
\end{theorem}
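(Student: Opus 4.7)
My plan is to reduce the theorem to a single key lemma about fixed points with vanishing $\mathcal{R}$-block, and then to prove that lemma through a Schwarz-type inequality between two CP maps, using the absorption hypothesis to kill the iterates. The inclusion $\overline{\mathrm{span}\{A(\mathcal{V}):\mathcal{V}\subseteq\mathcal{R}\text{ enclosure}\}}^{\|\cdot\|_\infty}\subseteq\mathcal{F}(\mathcal{P})$ is immediate since each $A(\mathcal{V})\in\mathcal{F}(\mathcal{P})$ and $\mathcal{F}(\mathcal{P})$ is norm-closed. For the other direction, I would first observe that for any $x\in\mathcal{F}(\mathcal{P})$ the block $y:=p_{\mathcal{R}}xp_{\mathcal{R}}$ belongs to $\mathcal{F}(\mathcal{P}^{\mathcal{R}})$: the enclosure property of $\mathcal{R}$ ensures that the $(\mathcal{R},\mathcal{R})$-block of $\mathcal{P}_t(x)$ depends only on $p_{\mathcal{R}}xp_{\mathcal{R}}$. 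Proposition~\ref{prop:recfp} and Corollary~\ref{coro:proj} applied to $\mathcal{P}^{\mathcal{R}}$ identify $\mathcal{F}(\mathcal{P}^{\mathcal{R}})$ with the norm-closed span of the projections $p_{\mathcal{V}}$, $\mathcal{V}\subseteq\mathcal{R}$ an enclosure. I would then define a lift $\Lambda(p_{\mathcal{V}}):=A(\mathcal{V})$ and extend linearly: the block identity \eqref{eq:blocks} gives $p_{\mathcal{R}}A(\mathcal{V})p_{\mathcal{R}}=p_{\mathcal{V}}$, and combined with $w^*$-lower semicontinuity of $\|\cdot\|_\infty$ together with $\|\mathcal{P}_t\|\leq 1$ makes $\Lambda$ contractive, so that it extends to $\Lambda:\mathcal{F}(\mathcal{P}^{\mathcal{R}})\to\mathcal{F}(\mathcal{P})$ with $p_{\mathcal{R}}\Lambda(y)p_{\mathcal{R}}=y$ and range in the closed span of absorption operators. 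Setting $z:=x-\Lambda(y)$ reduces the whole theorem to the key lemma: \emph{if $z\in\mathcal{F}(\mathcal{P})$ and $p_{\mathcal{R}}zp_{\mathcal{R}}=0$, then $z=p_{\mathcal{T}}zp_{\mathcal{T}}$}.

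This lemma is the main obstacle and the only place the absorption hypothesis enters. Splitting into self-adjoint/anti-self-adjoint parts I may take $z=z^*$, and writing $a:=p_{\mathcal{T}}zp_{\mathcal{R}}$ my task is to show $a=0$. I would fix a Kraus representation $\mathcal{P}_t(\,\cdot\,)=\sum_k V_k^*\cdot V_k$ for some $t>0$; the enclosure property of $\mathcal{R}$ forces $V_k=\bigl(\begin{smallmatrix}A_k & B_k\\ 0 & D_k\end{smallmatrix}\bigr)$ relative to $\mathcal{R}\oplus\mathcal{T}$, and $\sum_k V_k^*V_k=\mathbf{1}$ yields $\sum_k A_k^*A_k=I_{\mathcal{R}}$, $\sum_k A_k^*B_k=0$, and $\sum_k B_k^*B_k=I_{\mathcal{T}}-\Xi(I_{\mathcal{T}})$, where $\Xi(M):=\sum_k D_k^*MD_k$ is a CP contraction on $B(\mathcal{T})$. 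Reading the $(\mathcal{R},\mathcal{T})$-block of $\mathcal{P}_t(z)=z$ with $p_{\mathcal{R}}zp_{\mathcal{R}}=0$ gives $a^*=\Psi(a^*)$, where $\Psi(c):=\sum_k A_k^*cD_k$ acts on $B(\mathcal{T},\mathcal{R})$. The heart of the argument is the Schwarz-type inequality $\Psi(c)^*\Psi(c)\leq\Xi(c^*c)$, which follows from positivity of the block operator $(\delta_{kl}\mathbf{1}-A_kA_l^*)_{k,l}$ (a consequence of $\sum_k A_k^*A_k=I_{\mathcal{R}}$) by sandwiching with the column $(cD_k)_k$:
\[
\Xi(c^*c)-\Psi(c)^*\Psi(c)=\sum_{k,l}D_k^*c^*\bigl(\delta_{kl}\mathbf{1}-A_kA_l^*\bigr)cD_l\geq 0.
\]
Iterating (and using positivity of $\Xi$) yields $\Psi^n(c)^*\Psi^n(c)\leq\Xi^n(c^*c)$; applied to $c=a^*$ with $\Psi^n(a^*)=a^*$, this becomes $aa^*\leq\|a\|^2\Xi^n(I_{\mathcal{T}})$ for every $n$.

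The absorption hypothesis now enters via a short telescoping induction from the recursion $N_{n+1}=(I_{\mathcal{T}}-\Xi(I_{\mathcal{T}}))+\Xi(N_n)$ with $N_0=0$, giving $(\mathcal{P}_t^n(p_{\mathcal{R}}))_{22}=I_{\mathcal{T}}-\Xi^n(I_{\mathcal{T}})$; hence $A(\mathcal{R})=\mathbf{1}$ is exactly the statement $\Xi^n(I_{\mathcal{T}})\to 0$ in the $w^*$-topology. Passing to the limit in the previous display forces $aa^*=0$, so $a=0$ and the lemma is proved. To close, Corollary~\ref{coro:diag} applied to $\mathcal{F}(\mathcal{P}^{\mathcal{R}})$ splits $y=y_++y_0$ with $y_+\in\mathcal{F}(\mathcal{P}^{\mathcal{R}_+})$ and $y_0\in\mathcal{F}(\mathcal{P}^{\mathcal{R}_0})$; since $\mathcal{F}(\mathcal{P}^{\mathcal{R}})$ is a $W^*$-algebra, any enclosure $\mathcal{V}\subseteq\mathcal{R}$ decomposes as $(\mathcal{V}\cap\mathcal{R}_+)\oplus(\mathcal{V}\cap\mathcal{R}_0)$ with both parts enclosures, so by linearity of $\Lambda$ each $\Lambda(y_+)$, $\Lambda(y_0)$ lies in the closed span of $A(\mathcal{V}_+)$, respectively $A(\mathcal{V}_0)$, with $\mathcal{V}_+\subseteq\mathcal{R}_+$ and $\mathcal{V}_0\subseteq\mathcal{R}_0$. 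Both remaining equalities follow.
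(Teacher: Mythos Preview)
Your argument is essentially correct and takes a genuinely different route from the paper, but there is one small omission you should patch. Your key lemma only concludes $z=p_{\mathcal{T}}zp_{\mathcal{T}}$, which suffices for the block-diagonal (second) equality but \emph{not} for the first equality $\mathcal{F}(\mathcal{P})=\overline{\mathrm{span}\{A(\mathcal{V})\}}$: you obtain $x=\Lambda(y)+z$ with $\Lambda(y)$ in the closed span of absorption operators, yet a nonzero $z$ supported on $\mathcal{T}$ would not lie in that span. Fortunately the missing step is immediate with the tools you have already set up: if $z=p_{\mathcal{T}}zp_{\mathcal{T}}$ is fixed and self-adjoint, your block computation gives $p_{\mathcal{T}}zp_{\mathcal{T}}=\Xi(p_{\mathcal{T}}zp_{\mathcal{T}})$, hence $z=\Xi^n(z)$ and $-\|z\|\,\Xi^n(I_{\mathcal{T}})\le z\le \|z\|\,\Xi^n(I_{\mathcal{T}})\to 0$, so $z=0$. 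Once you state this, both equalities follow as you claim.

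As for the comparison: the paper's proof is more abstract. It invokes an external result (\cite[Theorem~2.4]{AGG}) to produce a conditional expectation $\mathcal{E}:B(\mathfrak{h})\to\mathcal{F}(\mathcal{P})$ as a $w^*$-limit of Ces\`aro averages $\frac{1}{t_\alpha}\int_0^{t_\alpha}\mathcal{P}_s\,dm(s)$, then shows $\mathcal{E}|_{\mathcal{F}(\mathcal{P}^{\mathcal{R}})}$ is a Banach-space isomorphism (along the lines of \cite[Proposition~21]{CG}) and computes $\mathcal{E}(p_{\mathcal{V}})=A(\mathcal{V})$. Your lift $\Lambda$ plays exactly the role of $\mathcal{E}|_{\mathcal{F}(\mathcal{P}^{\mathcal{R}})}$, but you build it by hand from the absorption operators and prove injectivity/surjectivity through the Kraus-block computation and the Schwarz-type inequality $\Psi(c)^*\Psi(c)\le\Xi(c^*c)$. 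This is more elementary and self-contained: it avoids the machinery of conditional expectations and makes the role of the hypothesis $A(\mathcal{R})=\mathbf{1}$ completely transparent, namely as the vanishing $\Xi^n(I_{\mathcal{T}})\to 0$ that kills both the off-diagonal block and the purely transient block of any fixed point. The paper's approach, on the other hand, yields the conditional expectation as a by-product, which is of independent interest.
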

\begin{proof}[Proof (sketch)]
Together with Proposition \ref{prop:recfp}, Theorem 2.4 in \cite{AGG} shows that there exists a conditional expectation $\mathcal{E}: B(\mathfrak{h}) \rightarrow \mathcal{F}(\mathcal{P})$ (it cannot be $w^*$-continuous unless $\mathcal{R}=\mathcal{R}_+$ by \cite[Theorem 2.1]{FV}) which can be approximated in the $w^*$-pointwise topology by the net $\left (\frac{1}{t_\alpha} \int_0^{t_\alpha} \mathcal{P}_s(\cdot) dm(s)\right )_{t_\alpha}$ for a directed set of times $(t_\alpha)$. One can show that $\mathcal{E}_{|{\cal F}(\mathcal{P}^{\mathcal{R}})}$ is an isomorphism of Banach spaces as in the proof of \cite[Proposition 21]{CG}, hence, by virtue of Corollary \ref{coro:diag},
$$\mathcal{F}(\mathcal{P})=\overline{{\rm span}\{\mathcal{E}(p_\mathcal{V}): \text{ $\mathcal{V} \subset \mathcal{R}$ enclosure}\}}^{\|\text{ } \|_\infty}.$$
Notice that, by definition of absorption operator, for every enclosure $\mathcal{V}$
$$\mathcal{E}(p_\mathcal{V})=w^*-\lim_{t_\alpha \rightarrow +\infty} \frac{1}{t_\alpha} \int_0^{t_\alpha} \mathcal{P}_s(p_\mathcal{V}) dm(s)=A(\mathcal{V}).$$
In order to prove the diagonal structure of $\mathcal{F}(\mathcal{P})$ with respect to $\mathcal{R}_+$ and $\mathcal{R}_0$, it suffices to show that it holds for absorption operators corresponding to recurrent enclosures and this follows from the fact that every enclosure $\mathcal{V} \subset \mathcal{R}$ is of the form $\mathcal{V}=(\mathcal{V} \cap \mathcal{R}_+) \oplus (\mathcal{V} \cap \mathcal{R}_0)$ (\cite[Corollary 17]{CG}) and from equation \eqref{eq:blocks}. Equation \eqref{eq:restriction} implies that for every $x \in \mathcal{F}(\mathcal{P})$, $p_{\mathcal{R}_+}xp_{\mathcal{R}_+}$ and $p_{\mathcal{R}_0}xp_{\mathcal{R}_0}$ are fixed points of the corresponding restrictions.
\end{proof}
Notice that in the case of recurrent semigroups, by equation \eqref{eq:blocks}, absorption operators coincide with harmonic projections and Theorem \ref{thm:fp} becomes Corollaries \ref{coro:proj} and \ref{coro:diag}. There are some interesting consequences of Theorem \ref{thm:fp} (hence we always assume $A(\mathcal{R})=\mathbf{1}$), whose proofs we skip because they are very similar to the ones of \cite[Proposition 23 and Proposition 29]{CG}, with suitable trivial adaptations.
\begin{enumerate}
\item Let $\mathcal{V} \subset \mathcal{R}$ be an enclosure; $A(\mathcal{V})=p_\mathcal{V} + p_\mathcal{T} A(\mathcal{V}) p_T$ (see equation \eqref{eq:blocks}) and $p_\mathcal{T} A(\mathcal{V}) p_T$ is the unique $y \in B(\mathcal{T})$ that solves
\begin{equation} \label{eq:ncls}
{\cal L}( y)=-p_\mathcal{T}{\cal L}(p_\mathcal{V})p_\mathcal{T}
\end{equation}
where ${\cal L}$ is the infinitesimal generator of $\mathcal{P}$ in continuous time or is equal to $\Phi-{\rm Id}$ in discrete time.

Equation \eqref{eq:ncls} is the analogous of the characterization of absorption probabilities as solution of a linear system: let $(X_n)_{n \geq 0}$ be a Markov chain on a countable state space $E$ with transition matrix $(p_{xy})_{x,y \in E}$. In the classical setting, enclosures correspond to closed sets: a subset $C \subset E$ is said to be closed if for every $x \in C$, $n \in \mathbb{N}$, $\mathbb{P}(X_n\not\in C|X_0=x)=0$. If recurrent states are absorbing, for every close set $C \subset \mathcal{R}$, the corresponding absorption probability is the unique solution of
\begin{equation} \label{eq:linsist}
\left\{\begin{array}{l}
A(C)_x=1 \mbox{ if $x \in C$}, \\
\sum_{y \in \mathcal{T}} (p_{xy}- \delta_{xy} )A(C)_y=-\sum_{y\in C} p_{xy} \mbox{ if $x \in \mathcal{T}$}.\\
\end{array}\right.
\end{equation}
\item Every enclosure $\mathcal{V}$ is of the form
\begin{equation}\label{eq:encl}
\mathcal{V}=(\mathcal{V} \cap \mathcal{R}_+)\oplus (\mathcal{V} \cap \mathcal{R}_0)\oplus(\mathcal{V} \cap \mathcal{T})
\end{equation}
where $p_{\mathcal{V} \cap \mathcal{T}} \leq A(\mathcal{V} \cap \mathcal{R})-(\mathcal{V} \cap \mathcal{R})$ and $\mathcal{V} \cap \mathcal{R} \neq \{0\}$.

As we mentioned in the previous section, enclosures play a fundamental role in the study of semigroups of quantum channels and in applications it is useful to find them; equation \eqref{eq:encl} ensures that, at least when $A(\mathcal{R})=1$, we can restrict our search to projections which commutes with $p_{\mathcal{R}_+}$, $p_{\mathcal{R}_0}$ and $p_\mathcal{T}$. It would be extremely interesting to know whether or not a subharmonic projection always commutes with $p_{\mathcal{R}_+}$, $p_{\mathcal{R}_0}$ and $p_\mathcal{T}$.

The last part of the statement tells us that every enclosure is composed by a non-null recurrent enclosure $\mathcal{V} \cap \mathcal{R}$, plus a linear space which gets completely absorbed into $\mathcal{V} \cap \mathcal{R}$.
\item $\mathcal{F}(\mathcal{P})$ is a $W^*$-algebra if and only if for every pair of orthogonal enclosures $\mathcal{V},{\cal W} \subset \mathcal{R}$, $A(\mathcal{V}) A({\cal W})=A({\cal W})A(\mathcal{V})=0$.

This last point provides a nice probabilistic characterization of the fact that $\mathcal{F}(\mathcal{P})$ is an algebra: it happens if and only if every state $\rho$ cannot reach two orthogonal recurrent enclosures.
\end{enumerate}

\section{Reducibility of recurrent semigroups} \label{sec:red}

Whether or not a recurrent semigroup admits a decomposition of the Hilbert space $\mathfrak{h}$ into orthogonal minimal enclosures (DOME) is a natural question because it is true for classical Markov chains and it was shown in many papers (for instance, see \cite[Proposition 7.1]{CP3}) for positive recurrent semigroups. In this section we will exhibit an example of a recurrent semigroup which does not admit a DOME, but before, we need to reformulate the problem thanks to Corollary \ref{coro:proj}.
\begin{lemma}
A recurrent semigroup $\mathcal{P}$ admits a DOME if and only if $\mathcal{F}(\mathcal{P})$ is an atomic $W^*$-algebra.
\end{lemma}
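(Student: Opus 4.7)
The plan is to exploit Corollary \ref{coro:proj}, which under the recurrence hypothesis provides a lattice isomorphism between enclosures of $\mathcal{P}$ and projections of the $W^*$-algebra $\mathcal{F}(\mathcal{P})$: every enclosure $\mathcal{V}$ yields a projection $p_\mathcal{V} \in \mathcal{F}(\mathcal{P})$, and conversely, since $\mathcal{F}(\mathcal{P})$ is a $W^*$-algebra generated (in norm) by such projections, every projection of $\mathcal{F}(\mathcal{P})$ is the projection onto some enclosure (being fixed, hence subharmonic, hence reducing for every $\mathcal{P}_t$). This correspondence preserves order and orthogonality.

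Under this identification, I would first show that minimal enclosures correspond exactly to the atoms (minimal nonzero projections) of $\mathcal{F}(\mathcal{P})$: if $\mathcal{V}$ is a minimal enclosure, any projection $q \in \mathcal{F}(\mathcal{P})$ with $q \leq p_\mathcal{V}$ equals $p_\mathcal{W}$ for some enclosure $\mathcal{W} \subseteq \mathcal{V}$, whence $\mathcal{W} \in \{\{0\},\mathcal{V}\}$ and $q \in \{0,p_\mathcal{V}\}$; the converse direction is symmetric. Then a DOME $\mathfrak{h}=\bigoplus_i \mathcal{V}_i$ with $\mathcal{V}_i$ minimal enclosures translates into a resolution $\mathbf{1}=\sum_i p_{\mathcal{V}_i}$ of the identity by pairwise orthogonal atoms of $\mathcal{F}(\mathcal{P})$, which is precisely atomicity of $\mathcal{F}(\mathcal{P})$. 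Conversely, if $\mathcal{F}(\mathcal{P})$ is atomic then $\mathbf{1}$ is the sum (in the strong operator topology) of a maximal family of pairwise orthogonal atoms $\{q_i\}_i \subset \mathcal{F}(\mathcal{P})$, and setting $\mathcal{V}_i:={\rm ran}(q_i)$ gives the required DOME.

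The only non-elementary input is the standard characterization of atomic $W^*$-algebras, namely that $\mathcal{A}$ is atomic if and only if the supremum of its atoms equals $\mathbf{1}_{\mathcal{A}}$, equivalently if and only if $\mathbf{1}_{\mathcal{A}}$ decomposes as a sum of pairwise orthogonal atoms. Everything else is bookkeeping through the bijection of Corollary \ref{coro:proj}. I do not foresee a serious obstacle: the only points requiring mild care are that the sum $\sum_i p_{\mathcal{V}_i}$ must be read in the $w^*$-topology and that the orthogonality of subspaces matches the orthogonality of the corresponding projections in $\mathcal{F}(\mathcal{P})$, both of which are immediate.
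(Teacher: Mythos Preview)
Your proposal is correct and follows exactly the route the paper intends: the lemma is stated without proof, the paper merely signalling that it is a reformulation ``thanks to Corollary~\ref{coro:proj}'' and recalling that atomicity of a $W^*$-algebra is equivalent to the identity being a sum of pairwise orthogonal minimal projections. Your write-up simply makes this correspondence explicit, and the only nontrivial input you flag---the standard characterization of atomic $W^*$-algebras---is precisely the fact the paper recalls immediately after the statement.
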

We recall that a $W^*$-algebra is atomic if for every projection $p$, there exists a minimal projection $q \leq p$ and this implies that there exists a denumerable family of orthogonal minimal projections $(q_\alpha)_{\alpha \in A}$ such that $\sum_{\alpha \in A} q_\alpha =\mathbf{1}$.

We remark that any DOME of $\mathcal{R}$ must be compatible with $\mathcal{R}_+$ and $\mathcal{R}_0$.
\begin{lemma}
Let $(\mathcal{V}_\alpha)$ be a DOME of $\mathcal{R}$, then $(\mathcal{V}_\alpha \cap \mathcal{R}_+)$ and $(\mathcal{V}_\alpha \cap \mathcal{R}_0)$ are DOMEs for $\mathcal{R}_+$ and $\mathcal{R}_0$, respectively.
\end{lemma}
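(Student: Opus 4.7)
The plan is to reduce the statement to the fact, recorded in \cite[Corollary 17]{CG} and already used in the proof of Theorem \ref{thm:fp}, that every enclosure $\mathcal{V}\subset\mathcal{R}$ splits as $\mathcal{V}=(\mathcal{V}\cap\mathcal{R}_+)\oplus(\mathcal{V}\cap\mathcal{R}_0)$. Applied to a minimal enclosure $\mathcal{V}_\alpha$ of the DOME, both summands $\mathcal{V}_\alpha\cap\mathcal{R}_+$ and $\mathcal{V}_\alpha\cap\mathcal{R}_0$ are themselves enclosures of $\mathcal{P}$ contained in $\mathcal{V}_\alpha$, so by minimality each of them is either $\{0\}$ or the whole $\mathcal{V}_\alpha$. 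Orthogonality of $\mathcal{R}_+$ and $\mathcal{R}_0$ forces exactly one of the two possibilities, and consequently every $\mathcal{V}_\alpha$ lies entirely in $\mathcal{R}_+$ or entirely in $\mathcal{R}_0$.

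Next I would partition the index set accordingly as $A=A_+\sqcup A_0$, with $A_\pm=\{\alpha:\mathcal{V}_\alpha\subset\mathcal{R}_\pm\}$ (using $\pm$ as a shorthand for $+$ and $0$). For $\alpha\in A_+$ one has $\mathcal{V}_\alpha\cap\mathcal{R}_+=\mathcal{V}_\alpha$ and $\mathcal{V}_\alpha\cap\mathcal{R}_0=\{0\}$, and symmetrically for $\alpha\in A_0$; hence the families $(\mathcal{V}_\alpha\cap\mathcal{R}_+)_\alpha$ and $(\mathcal{V}_\alpha\cap\mathcal{R}_0)_\alpha$, once the zero members are discarded, coincide with $(\mathcal{V}_\alpha)_{\alpha\in A_+}$ and $(\mathcal{V}_\alpha)_{\alpha\in A_0}$ respectively. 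Orthogonality inside each family is inherited from the original DOME, and the decomposition $\mathcal{R}=\mathcal{R}_+\oplus\mathcal{R}_0$ together with $\bigoplus_{\alpha\in A}\mathcal{V}_\alpha=\mathcal{R}$ gives $\bigoplus_{\alpha\in A_+}\mathcal{V}_\alpha=\mathcal{R}_+$ and $\bigoplus_{\alpha\in A_0}\mathcal{V}_\alpha=\mathcal{R}_0$ by comparing the two orthogonal splittings of $\mathcal{R}$.

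It then remains to upgrade "minimal enclosure of $\mathcal{P}$ contained in $\mathcal{R}_\pm$" to "minimal enclosure of $\mathcal{P}^{\mathcal{R}_\pm}$". Here I would use that, since $\mathcal{R}_\pm$ is itself an enclosure for $\mathcal{P}$, a closed subspace $\mathcal{W}\subseteq\mathcal{R}_\pm$ is an enclosure for $\mathcal{P}^{\mathcal{R}_\pm}$ if and only if it is an enclosure for $\mathcal{P}$: states supported in $\mathcal{W}\subseteq\mathcal{R}_\pm$ are left in $\mathcal{R}_\pm$ by $\mathcal{P}_*$, so $\mathcal{P}_*$ and $\mathcal{P}^{\mathcal{R}_\pm}_*$ agree on them, and the enclosure condition translates verbatim. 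Consequently any nontrivial enclosure of $\mathcal{P}^{\mathcal{R}_\pm}$ inside $\mathcal{V}_\alpha$ would be a nontrivial enclosure of $\mathcal{P}$ inside $\mathcal{V}_\alpha$, contradicting the minimality assumed in the DOME of $\mathcal{R}$.

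There is no real obstacle in this argument; the only point that deserves care is the equivalence between being an enclosure for $\mathcal{P}$ and being an enclosure for the restricted semigroup $\mathcal{P}^{\mathcal{R}_\pm}$, which ensures that the notion of minimality is preserved when we pass from $\mathcal{P}$ acting on $\mathcal{R}$ to its further restrictions on $\mathcal{R}_+$ and $\mathcal{R}_0$. Everything else is a direct consequence of the splitting of enclosures contained in $\mathcal{R}$ recalled at the beginning.
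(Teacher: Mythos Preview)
Your argument is correct and follows essentially the same route as the paper: both hinge on the splitting $\mathcal{V}=(\mathcal{V}\cap\mathcal{R}_+)\oplus(\mathcal{V}\cap\mathcal{R}_0)$ for enclosures $\mathcal{V}\subset\mathcal{R}$ (the paper invokes it via Corollary~\ref{coro:diag}, you via \cite[Corollary~17]{CG} directly), and then use minimality to force each $\mathcal{V}_\alpha$ entirely into $\mathcal{R}_+$ or $\mathcal{R}_0$. Your write-up is actually more careful than the paper's in spelling out the partition of the index set and the equivalence of enclosures for $\mathcal{P}$ versus $\mathcal{P}^{\mathcal{R}_\pm}$, which the paper leaves implicit.
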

\begin{proof}
Corollary \ref{coro:diag} implies that every (sub)harmonic projection $p_\mathcal{V}$ for $\mathcal{P}^{\mathcal{R}}$ commutes with $p_{\mathcal{R}_+}$, $p_{\mathcal{R}_0}$ and, since $\mathcal{R}_+$ and $\mathcal{R}_0$ are enclosures and the intersection of two enclosures is again an enclosure, $\mathcal{V} \cap \mathcal{R}_+$ and $\mathcal{V} \cap \mathcal{R}_0$ are enclosures. Therefore a DOME of $\mathcal{R}$ $(\mathcal{V}_\alpha)$ induces DOMEs $(\mathcal{V}_\alpha \cap \mathcal{R}_+)$ and $(\mathcal{V}_\alpha \cap \mathcal{R}_0)$ for $\mathcal{R}_+$ and $\mathcal{R}_0$, respectively (since $\mathcal{V}_\alpha$ is minimal, $\mathcal{V}_\alpha \cap \mathcal{R}_+$ and $\mathcal{V}_\alpha \cap \mathcal{R}_0$ are either $\{0\}$ or $\mathcal{V}_\alpha$).
\end{proof}
We already know that there always exists a DOME of $\mathcal{R}_+$, hence the existence of a DOME of $\mathcal{R}$ and $\mathcal{R}_0$ are equivalent problems.

\begin{example}[Noncommutative symmetric random walk on $\mathbb{Z}$] \label{ex:ncsymm}
Let us consider the group $G$ with two generators $a,b$ satisfying $a^2=b^2=e$ ($e$ is the identity element) and the corresponding left and right representations defined on $\mathfrak{h}=\ell^2(G)$. For every $g \in G$, $\lambda(g)$ and $\rho(g)$ are the unitary operators acting in the following way on the canonical basis $\mathscr{C}=\{\delta_g:g \in G\}$:
\[\lambda(g) \delta_h=\delta_{gh}, \quad \rho(g) \delta_h=\delta_{hg^{-1}} \quad \forall g, h \in G.
\]
Notice that $\lambda(g)^*=\lambda(g^{-1})$ and $\rho(g)^*=\rho(g^{-1})$ for every $g \in G$.
We introduce the following notation:
\[L(G)=\{\lambda(g):g \in G\}^{\prime\prime}=\{\lambda(a), \lambda(b)\}^{\prime\prime},\quad R(G)=\{\rho(g):g \in G\}^{\prime\prime}=\{\rho(a), \rho(b)\}^{\prime\prime}.
\]
We recall that $R(G)=L(G)^\prime$ ( \cite[Section V.7]{Ta}).
We define the quantum channel
\[\Phi(x)=\frac{1}{2}\lambda(a)x\lambda(a)+\frac{1}{2}\lambda(b)x\lambda(b), \quad x \in B(\mathfrak{h})
\]
and we consider the semigroup $\mathcal{P}:=(\Phi^n)_{n \in \mathbb{N}}$.
\paragraph{Invariant commutative subalgebra and null recurrence.} Consider the commutative W*-subalgebra of operators which are diagonal in the canonical basis $\mathscr{C}$ and its predual:
\begin{align*}
\Delta=\{x \in B(\mathfrak{h}): x=\sum_{g \in G} x_g\ket{\delta_g}\bra{\delta_g} \}\simeq \ell^\infty(G),\\
\Delta_*=\{x \in L^1(\mathfrak{h}): x=\sum_{g \in G} x_g\ket{\delta_g}\bra{\delta_g}\} \simeq \ell^1(G).
\end{align*}
Since $G$ is countable, we can relable its elements with integer numbers:
\[ \dots aba \mapsto -3 \quad ba \mapsto -2 \quad  a \mapsto -1 \quad e \mapsto 0 \quad b \mapsto 1 \quad ab \mapsto 2 \quad bab \mapsto 3\dots
\]
and this provides isomorphisms between $\ell^\alpha(G)$ and $\ell^\alpha(\mathbb{Z})$ for $\alpha \in \{1,2,\infty\}$. $\lambda(a), \lambda(b)$ act on $\mathscr{C}$ in the following way (for any $g \in G$, the label $g$ stays for $\delta_g$):
\begin{center}
    \begin{tikzpicture}
        
        \node[state]             (e) {e};
        \node[state, right=of e] (b) {b};
        \node[state, right=of b] (ab) {ab};
         \node[state, right=of ab] (end) {...};
         \node[state, left=of e] (a) {a};
         \node[state, left=of a] (ba) {ba};
         \node[state, left=of ba] (start) {...};

        \draw[every loop]
            (start) edge[bend right, auto=right] node {$\lambda(a)$} (ba)
            (ba) edge[bend right, auto=right] node {$\lambda(a)$} (start)
            (ba) edge[bend right, auto=right] node {$\lambda(b)$} (a)
            (a) edge[bend right, auto=right] node {$\lambda(b)$} (ba)
            (a) edge[bend right, auto=right] node {$\lambda(a)$} (e)
            (e) edge[bend right, auto=right] node {$\lambda(a)$} (a)
            (e) edge[bend right, auto=right] node {$\lambda(b)$} (b)
            (b) edge[bend right, auto=right] node {$\lambda(b)$} (e)
            (b) edge[bend right, auto=right] node {$\lambda(a)$} (ab)
            (ab) edge[bend right, auto=right] node {$\lambda(a)$} (b)
            (ab) edge[bend right, auto=right] node {$\lambda(b)$} (end)
            (end) edge[bend right, auto=right] node {$\lambda(b)$} (ab);
    \end{tikzpicture}
\end{center}
    
It is easy to see that $\Phi$ preserves $\Delta$ and $\Delta_*$ and its restriction corresponds via the isomorphisms above to the transition matrix of a symmetric random walk on $\mathbb{Z}$: $\Phi(\ket{\delta_g}\bra{\delta_g})=\frac{1}{2}(\ket{\delta_{ag}}\bra{\delta_{ag}}+\ket{\delta_{bg}}\bra{\delta_{bg}})$. The symmetric random walk on $\mathbb{Z}$ is null recurrent and this implies that also $\mathcal{P}$ is null recurrent.
\begin{proposition} \label{prop:nr}
$\mathcal{P}$ is null recurrent.
\end{proposition}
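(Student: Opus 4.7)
The plan is to reduce null recurrence of $\mathcal{P}$ to null recurrence of the simple symmetric random walk on $\mathbb{Z}$ via the invariant diagonal subalgebra $\Delta$. The crucial tool is the normal conditional expectation $E_\Delta:B(\mathfrak{h})\to\Delta$ defined by $E_\Delta(y)=\sum_{g\in G}\langle\delta_g,y\delta_g\rangle\ket{\delta_g}\bra{\delta_g}$; a direct computation using $\lambda(a)\delta_g=\delta_{ag}$ (and the analogous formula for $b$) shows that $E_\Delta\circ\Phi=\Phi\circ E_\Delta$, and the same relation holds at the predual level, so that invariant states and bounded form-potentials for $\mathcal{P}$ project to objects of the same type for the classical chain.

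First I would rule out positive recurrence. If $\rho$ were a $\Phi_*$-invariant state, then $E_\Delta(\rho)$ would belong to $\Delta_*\simeq\ell^1(\mathbb{Z})$, have unit trace, and be invariant under $\Phi_*|_{\Delta_*}$, which by the setup is the Markov operator of the simple symmetric random walk on $\mathbb{Z}$. Null recurrence of the latter forbids an invariant probability measure, contradicting the existence of $\rho$; hence $\mathcal{R}_+=\{0\}$.

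Next I would show that $\mathcal{T}=\{0\}$. Take $x\ge 0$ with $\mathfrak{U}(x)$ bounded as a quadratic form. Since $\langle\delta_g,E_\Delta(y)\delta_g\rangle=\langle\delta_g,y\delta_g\rangle$ for every $g$ and since $E_\Delta$ commutes with all the $\Phi^n$, one has $\sup_{g\in G}\sum_{n\ge 0}\langle\delta_g,\Phi^n(E_\Delta(x))\delta_g\rangle<\infty$: this is precisely the statement that the classical Green function applied to $E_\Delta(x)\in\ell^\infty(\mathbb{Z})_+$ is uniformly bounded. Since the simple symmetric random walk on $\mathbb{Z}$ is recurrent, $\sum_n P^n(g,h)=+\infty$ for every pair $g,h$, so any non-zero entry of $E_\Delta(x)$ would blow the Green function up at every site; hence $E_\Delta(x)=0$. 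Because $x\ge 0$ has all diagonal entries zero in the orthonormal basis $\mathscr{C}$, we conclude $x=0$, so no non-zero positive operator has bounded form-potential and $\mathcal{T}=\{0\}$.

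Combining the two steps, $\mathcal{R}=\mathcal{T}^\perp=\mathfrak{h}$ and $\mathcal{R}_+=\{0\}$ give $\mathcal{R}_0=\mathfrak{h}$, i.e.\ $\mathcal{P}$ is null recurrent. The main subtlety lies in the second step: $E_\Delta$ is not order-reducing ($E_\Delta(y)\le y$ fails in general for $y\ge 0$), so one cannot compare the form-potentials of $x$ and $E_\Delta(x)$ directly; the fix is to evaluate $\mathfrak{U}$ on the canonical basis, where the conditional expectation is transparent, and to exploit the uniform bound provided by boundedness of the quadratic form to couple to the recurrence of the classical chain.
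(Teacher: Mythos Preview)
Your proof is correct and follows essentially the same route as the paper: both reduce to the classical simple random walk on $\mathbb{Z}$ via the diagonal subalgebra, ruling out an invariant state by projecting to $\Delta_*$ and ruling out transience by evaluating the form-potential on the canonical basis vectors $\delta_g$, where $\mathfrak{U}(x)[\delta_g]=\sum_{n,h}P^n(g,h)\langle\delta_h,x\delta_h\rangle$ diverges as soon as some diagonal entry of $x$ is nonzero. The conditional expectation $E_\Delta$ you introduce is a tidy way to package this, but since you ultimately only use it on basis vectors (where it is transparent), the argument collapses to the paper's direct computation with $p^k_{0,0}$.
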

\begin{proof}
1. Consider a non-null positive operator $x$, then there must exists some $g \in G$ such that $\langle \delta_g,x\delta_g\rangle \geq c>0$ for some positive constant $c$. By the symmetry of the semigroup, we can assume $g=e$.
\[{\frak U}(x)[ \delta_e]=\sum_{k=0}^{+\infty} {\rm tr}(\Phi^k_*(\ket{\delta_e}\bra{\delta_e})x)\geq \sum_{k=0}^{+\infty} p^k_{0,0} \langle \delta_e,x\delta_e\rangle\geq c \sum_{k=0}^{+\infty} p^k_{0,0}=+\infty,
\]
where $p^k_{0,0}$ is the probability that a symmetric random walk on $\mathbb{Z}$ that starts in $0$ comes back to $0$ in $k$ steps. $\mathfrak{U}(x)$ is unbounded, hence $\mathcal{T}=\{0\}$.\\
2. Suppose there exists an invariant state $\rho$; the action on $\rho$ on $\Delta$ is represented by a state $\tilde{\rho}\in \Delta_*$, which must be invariant for the symmetric random walk on $\mathbb{Z}$, but this is again a contradiction, hence $\mathcal{R}_+=\{0\}$.
\end{proof}

\paragraph{$\mathcal{F}(\mathcal{P})$ has no minimal projections.} When $\mathcal{F}(\mathcal{P})$ is an algebra (this is the case by Proposition \ref{prop:recfp}) and the semigroup is generated by a single quantum channel $\Phi$, Proposition 1 in \cite{CJ} provides us a characterization of $\mathcal{F}(\mathcal{P})$ in terms of Kraus operators of $\Phi$:
\[\mathcal{F}(\mathcal{P})=\{\lambda(a),\lambda(b)\}^\prime=L(G)^\prime=R(G).
\]
We will show something stronger than the fact that $R(G)$ is not atomic: namely, we will prove that it has no minimal projections. We denote by $Z(G):=R(G) \cap L(G)$ the center of $R(G)$; notice that $(\rho(ab)+\rho(ba))/2 \in Z(G)$: it clearly belongs to $R(G)$ and it commutes with $\rho(a)$ (by symmetry it commutes with $\rho(b)$ too):
\[\rho(a) (\rho(ab)+\rho(ba))=\rho(b)+\rho(aba)=(\rho(ab)+\rho(ba))\rho(a).
\]
Let us focus on the action of $\rho(ab)$ and $\rho(ba)=\rho(ab)^*$ on $\mathscr{C}$:
\begin{center}
    \begin{tikzpicture}
        
        \node[state]             (e) {e};
        \node[state, right=of e] (ba) {ba};
        \node[state, right=of ba] (baba) {$(ba)^2$};
         \node[state, right=of baba] (end) {...};
         \node[state, left=of e] (ab) {ab};
         \node[state, left=of ab] (abab) {$(ab)^2$};
         \node[state, left=of abab] (start) {...};

        \draw[every loop]
            (start) edge[bend right, auto=right] node {$\rho(ab)$} (abab)
            (abab) edge[bend right, auto=right] node {$\rho(ba)$} (start)
            (abab) edge[bend right, auto=right] node {$\rho(ab)$} (ab)
            (ab) edge[bend right, auto=right] node {$\rho(ba)$} (abab)
            (ab) edge[bend right, auto=right] node {$\rho(ab)$} (e)
            (e) edge[bend right, auto=right] node {$\rho(ba)$} (ab)
            (e) edge[bend right, auto=right] node {$\rho(ab)$} (ba)
            (ba) edge[bend right, auto=right] node {$\rho(ba)$} (e)
            (ba) edge[bend right, auto=right] node {$\rho(ab)$} (baba)
            (baba) edge[bend right, auto=right] node {$\rho(ba)$} (ba)
            (baba) edge[bend right, auto=right] node {$\rho(ab)$} (end)
            (end) edge[bend right, auto=right] node {$\rho(ba)$} (baba);
    \end{tikzpicture}
\end{center}
\begin{center}
    \begin{tikzpicture}
        
        \node[state]             (e) {a};
        \node[state, right=of e] (ba) {aba};
        \node[state, right=of ba] (baba) {...};
         \node[state, left=of e] (ab) {b};
         \node[state, left=of ab] (abab) {bab};
         \node[state, left=of abab] (start) {...};

        \draw[every loop]
            (start) edge[bend right, auto=right] node {$\rho(ab)$} (abab)
            (abab) edge[bend right, auto=right] node {$\rho(ba)$} (start)
            (abab) edge[bend right, auto=right] node {$\rho(ab)$} (ab)
            (ab) edge[bend right, auto=right] node {$\rho(ba)$} (abab)
            (ab) edge[bend right, auto=right] node {$\rho(ab)$} (e)
            (e) edge[bend right, auto=right] node {$\rho(ba)$} (ab)
            (e) edge[bend right, auto=right] node {$\rho(ab)$} (ba)
            (ba) edge[bend right, auto=right] node {$\rho(ba)$} (e)
            (ba) edge[bend right, auto=right] node {$\rho(ab)$} (baba)
            (baba) edge[bend right, auto=right] node {$\rho(ba)$} (ba);
           
    \end{tikzpicture}
\end{center}
Hence there exists a unitary operator $U: \ell^2(\mathbb{Z}) \otimes \mathbb{C}^2 \rightarrow \ell^2(G)$ such that $U^*\rho(ab)U$ is $S \otimes \mathbf{1}_{\mathbb{C}^2}$, where $S$ is the right shift operator. Consider the Fourier transform between the one dimensional torus $\mathbb{T}$ and $\mathbb{Z}$:
\[\begin{split}
\mathscr{F}: L^2(\mathbb{T}) &\rightarrow \ell^2(\mathbb{Z})\\
f(x) &\mapsto \hat{f}(k):= \frac{1}{2\pi}\int_\mathbb{T}f(x) e^{-ikx}dx.
\end{split}
\]
By the Fourier transform properties, it is easy to see that $\mathscr{F}^{-1} S\mathscr{F}$ is the multiplication operator $M_{e^{ix}}$ corresponding to the function $e^{ix}$. We have the following chain of equalities
\[\begin{split}
M_{\cos(x)}\otimes \mathbf{1}_{\mathbb{C}^2}&=\left (\frac{M_{e^{ix}}+M_{e^{-ix}}}{2} \right) \otimes \mathbf{1}_{\mathbb{C}^2}=\mathscr{F}^{-1} (S+S^*)\mathscr{F} \otimes \mathbf{1}_{\mathbb{C}^2}=\\
&=(\mathscr{F}^{-1} \otimes \mathbf{1}_{\mathbb{C}^2}) U^* \left (\frac{\rho(ab)+\rho(ba)}{2}\right ) U (\mathscr{F} \otimes \mathbf{1}_{\mathbb{C}^2}).\\
\end{split}
\]
Therefore the W*-algebras generated by $M_{\cos(x)}$ and $(\rho(ab)+\rho(ba))/2$ are isomorphic.
\begin{proposition}
$R(G)$ has no minimal projection.
\end{proposition}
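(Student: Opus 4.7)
The plan is to argue by contradiction, exploiting the central element $z := (\rho(ab)+\rho(ba))/2 \in Z(G)$ just produced, together with the spectral picture of $z$ established in the preceding computation. That computation shows that $z$ is unitarily equivalent to $M_{\cos(x)} \otimes \mathbf{1}_{\mathbb{C}^2}$ on $L^2(\mathbb{T}) \otimes \mathbb{C}^2$, and the whole argument rests on the fact that $M_{\cos(x)}$ has purely continuous spectrum: for any $\lambda \in \mathbb{R}$, the equation $\cos(x)\xi(x) = \lambda \xi(x)$ a.e.\ forces $\xi$ to vanish outside $\cos^{-1}(\{\lambda\})$, which is finite and hence Lebesgue-null. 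Consequently the spectral projection $E_z(\{\lambda\})$ vanishes for every real $\lambda$.

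With this in hand the proof is short. Assume, for contradiction, that $p \in R(G)$ is a nonzero minimal projection, so that $p R(G) p = \mathbb{C} p$. Since $z$ is central, $zp = pzp \in p R(G) p = \mathbb{C} p$, and therefore $zp = \lambda p$ for some real $\lambda$. A routine Borel functional-calculus argument (the identity $z^n p = \lambda^n p$ is immediate by induction and extends to any bounded Borel $f$ by approximation) gives $f(z) p = f(\lambda) p$; specialising to $f = \chi_{\{\lambda\}}$ yields $E_z(\{\lambda\}) p = p$, so $p \leq E_z(\{\lambda\})$. Combined with the first paragraph this forces $p = 0$, a contradiction. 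In conjunction with the first lemma of the present section and the identification $\mathcal{F}(\mathcal{P}) = R(G)$, this also delivers the intended payoff: $\mathcal{P}$ admits no DOME.

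The only mildly delicate point is the transport of spectral projections through the unitary equivalence $z \cong M_{\cos(x)} \otimes \mathbf{1}_{\mathbb{C}^2}$: one needs the equivalence to intertwine the full Borel functional calculus, not merely the operators themselves. This is automatic, since unitary conjugation is a normal $*$-isomorphism between the respective singly-generated abelian von Neumann algebras and so carries $E_z(\{\lambda\})$ to $E_{M_{\cos(x)}}(\{\lambda\}) \otimes \mathbf{1}_{\mathbb{C}^2} = 0$. All the conceptual work is really done by the structural observation already produced in the excerpt, namely that the centre of $R(G)$ contains a self-adjoint element with diffuse spectrum.
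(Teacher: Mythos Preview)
Your proof is correct and rests on the same structural observation as the paper's: the central element $z=(\rho(ab)+\rho(ba))/2$ has diffuse spectrum, so a minimal projection of $R(G)$, which is forced to sit below a point spectral projection of $z$, must vanish. The paper arrives at this via a slightly more hands-on route: it constructs a nested family of dyadic spectral projections $p_{j,n}\in Z(G)$ of $z$, notes that a minimal $f$ satisfies $f\le p_{j(n),n}$ for a unique $j(n)$ at each level $n$, and then uses $p_{j(n),n}\downarrow 0$. Your argument short-circuits the bisection by invoking $pR(G)p=\mathbb{C}p$ to extract a genuine eigenvalue $zp=\lambda p$ directly, and then appeals to $E_z(\{\lambda\})=0$; this is cleaner and more conceptual, but the two proofs are morally identical.
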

\begin{proof}
Let us consider the following sequence of projections in the $W^*$-algebra generated by $M_{\cos(x)}$: for any $n \in \mathbb{N}$, $j=0,\dots,2^{n+1}-1$ we define $q_{j,n}$ as the indicator function of the set
\[\cos^{-1}\left (\left [-1+j 2^{-n},-1+(j+1) 2^{-n}\right ]\right );\]
we define $p_{j,n}:=U(\mathscr{F} \otimes \mathbf{1}_{\mathbb{C}^2}) (q_{j,n} \otimes \mathbf{1}_{\mathbb{C}^2}) (\mathscr{F}^{-1} \otimes \mathbf{1}_{\mathbb{C}^2})U^*$, which is the spectral projection of $(\rho(ab)+\rho(ba))/2$ corresponding to the same set and it is in $Z(G)$. Notice that by construction
\begin{enumerate}
\item $\sum_{j=0}^{,2^{n+1}-1} p_{j,n}=\mathbf{1}$,
\item $p_{j,n}p_{k,n}=\delta_{jk} p_{j,n}$ for every $n \in \mathbb{N}$, $j=0,\dots, 2^{n+1}-1$,
\item for every $p_{j,n}$ with $n \geq 1$, there exists a unique $k$ such that $p_{j,n}p_{k,n-1}=p_{j,n}$.
\end{enumerate}

Let $f \in R(G)$ be a minimal projection; since $(p_{j,n}) \subset Z(G)$, for every $n \in \mathbb{N}$ and $j \in \{0,\dots, n-1\}$
\[fp_{j,n}=p_{j,n}f=p_{j,n}fp_{j,n}
\]
is a projection dominated by $p_{j,n}$ and $f$. The minimality of $f$ implies that $fp_{j,n}$ is either $0$ or $f$ itself, and, by $1.$ and $2.$, for every $n \in \mathbb{N}$, there exists a unique $j \in \{0,\dots, n-1\}$, which we call $j(n)$, such that $p_{j,n}f=f$. Hence $f \leq p_{j(n),n}$ for every $n \in \mathbb{N}$, $p_{j(n+1),n+1} \leq p_{j(n),n}$ for every $n \in \mathbb{N}$ and $p_{j(n),n} \downarrow 0$ (in the $w^*$-topology), hence $f=0$.
\end{proof}

\begin{remark}
The following is the GKLS generator of a continuous time counterpart of the same example:
\[{\cal L}(x)=\frac{1}{2}(\lambda(a)x\lambda(a)-x)+\frac{1}{2}(\lambda(b)x\lambda(b)-x).
\]
The analysis above can be carried out also in this case.
\end{remark}
\begin{remark}
The structure of the decoherence-free subalgebra ${\cal N}(\mathcal{P})$ of a semigroup and wether it is atomic or not has been investigated in \cite{CJ,FSU19,SU21}, especially in relation to enviromental decoherence. In the present example, even ${\cal N}(\mathcal{P})$ has no minimal projection. Indeed, by \cite[Proposition 3]{CJ}, we have that
\[{\cal N}(\mathcal{P})=\{\lambda(ab), \lambda(ba)\}^\prime.
\]
Consider the decomposition $\ell^2(G)=\bigoplus_{h \in S} \ell^2(Hs)\simeq \ell^2(H) \otimes \ell^2(S)$ where $H$ is the subgroup of $G$ generated by $ ab,ba$ and $S$ is a set of representatives of $G/H$. Notice that ${\cal N}(\mathcal{P})$ is $(L(H) \otimes \mathbf{1}_{\ell^2(S)})^\prime=R(H) \otimes B(\ell^2(S))$, hence its center is $L(H)\cap R(H) \otimes \mathbf{1}_{\ell^2(S)}=\{\lambda(ab), \lambda(ba)\}^{\prime \prime}\cap  \{\rho(ab),\rho(ba)\}^{\prime\prime}$. $(\rho(ab)+\rho(ba))/2$ is in the center of ${\cal N}(\mathcal{P})$ too, hence we can repeat the same proof as for $\mathcal{F}(\mathcal{P})$.
\end{remark}
\end{example}

Although Example \ref{ex:ncsymm} shows that in general ${\cal F}(\mathcal{P}^{\mathcal{R}})$ is not atomic, \cite[Theorem 2.4]{AGG} together with Proposition \ref{prop:recfp} shows that it is injective. A natural question is if it possible to find \textit{interesting} features of the semigroup ensuring the atomicity of ${\cal F}(\mathcal{P}^{\mathcal{R}})$ and, hence, the decomposition of $\mathcal{R}$ into orthogonal minimal enclosures.

\bigskip {\noindent\bf Ackonwledgements.} The author acknowledges the support of the INDAM GNAMPA project 2020 ``Evoluzioni markoviane quantistiche'' and of
the Italian Ministry of Education, University and Research (MIUR) for the Dipartimenti di Eccellenza Program (2018–
2022)—Dept. of Mathematics ``F. Casorati'', University of Pavia. Special aknowledgments go to the organizers of the \textit{$41^\text{st}$ International Conference on Quantum Probability and Related Topics} and to Raffaella Carbone for introducing the author to the problem and for many useful suggestions.

\bibliographystyle{abbrv}
\bibliography{biblio}
\end{document}